\def\arxiv{1} 
\newtheorem{theorem}{Theorem}
\newtheorem{corollary}[theorem]{Corollary}
\newtheorem{lemma}[theorem]{Lemma}
\newtheorem{definition}{Definition}
\newtheorem{claim}[theorem]{Claim}
\newcommand{\moti}[1]{\hl{\textbf{M}: #1}}
\renewcommand{\deg}{d}
\newcommand{\congest}{{\sf CONGEST}}
\newcommand{\msg}[2]{M_{{#1}\rightarrow{#2}}}
\newcommand{\eps}{\varepsilon}
\newcommand{\prop}{{\cal P}}
\newcommand{\Prr}[1]{{\mathrm Pr}\left[ #1 \right]}
\DeclareMathOperator{\load}{load}
\DeclareMathOperator{\diam}{diam}
\DeclareMathOperator{\poly}{Poly}
\title{\textbf{Faster and Simpler Distributed Algorithms for Testing and Correcting Graph Properties in the \congest-Model}}
\titlerunning{Faster \& Simpler Dist. Algs. for Testing \& Correcting Graph Properties in CONGEST}
\author[1]{Guy Even}
\author[2]{Reut Levi}
\author[3]{Moti Medina}
\affil[1]{
 Tel-Aviv University\\
  \texttt{guy@eng.tau.ac.il}}
\affil[2]{Max Planck Institute
	for Informatics, Saarland Informatics Campus, Germany\\
\texttt{rlevi@mpi-inf.mpg.de}}
\affil[3]{Max Planck Institute
	for Informatics, Saarland Informatics Campus, Germany\\
\texttt{mmedina@mpi-inf.mpg.de}}
\authorrunning{Even, Levi, and Medina}
\subjclass{F.2 ANALYSIS OF ALGORITHMS AND PROBLEM COMPLEXITY}
\author{
 	Guy~Even\thanks{Tel-Aviv University, Israel
 	\protect\url{guy@eng.tau.ac.il}}
 	\and
 	Reut~Levi\thanks{
 	Max Planck Institute
	for Informatics, Saarland Informatics Campus, Germany
 	\protect\url{{rlevi,mmedina}@mpi-inf.mpg.de}}
    \and 
    Moti~Medina$^\dagger$
}
\date{}
\begin{document}

\maketitle
\begin{abstract}
  In this paper we present distributed testing algorithms of graph
  properties in the \congest-model~\cite{censor2016fast}.  Concretely,
  a distributed one-sided error $\eps$-tester for a property
  $\mathcal{P}$ meets the following specification: if the network has
  the property $\mathcal{P}$, then all of the network's processors
  should output YES, and if the network is $\eps$-far from having the
  property, according to a predetermined distance measure, then at
  least one processor outputs NO with probability at least $2/3$.

  We present one-sided error testing algorithms in the general graph
  model.

  We first describe a general procedure for converting $\eps$-testers
  with a number of rounds $f(D)$, where $D$ denotes the diameter of the
  graph, to $O((\log n)/\eps)+f((\log n)/\eps)$ rounds, where $n$ is
  the number of processors of the network. We then apply this
  procedure to obtain an optimal tester, in terms of $n$, for testing
  bipartiteness, whose round complexity is $O(\eps^{-1}\log n)$, which
  improves over the $\poly(\eps^{-1} \log n)$-round algorithm by
  Censor-Hillel et al. (DISC 2016). Moreover, for cycle-freeness, we obtain a
  \emph{corrector} of the graph that locally corrects the
  graph so that the corrected graph is acyclic. Note that,
  unlike a tester, a corrector needs to mend the graph in many places
  in the case that the graph is far from having the property.

  In the second part of the paper we design algorithms for testing
  whether the network is $H$-free for any connected $H$ of size up to
  four with round complexity of $O(\eps^{-1})$. This improves over the
  $O(\eps^{-2})$-round algorithms for testing triangle freeness by
  Censor-Hillel et al. (DISC 2016) and for testing excluded graphs of
  size $4$ by Fraigniaud et al. (DISC 2016).

  In the last part we generalize the global tester by Iwama and
  Yoshida~\cite{IwamaY14} of testing $k$-path freeness to testing the
  exclusion of any tree of order $k$. We then show how to simulate
  this algorithm in the \congest-model in $O(k^{k^2+1}\cdot\eps^{-k})$
  rounds.
\end{abstract}

\if\arxiv=0
\keywords{Property testing, Property correcting, Distributed algorithms, CONGEST model}
\else
\paragraph{Keywords.}Property testing, Property correcting, Distributed algorithms, CONGEST model.
\fi

\section{Introduction}
In graph property
testing~\cite{goldreich1998property,goldreich2002property} the goal is
to design a sequential sublinear algorithm that, given a query access
to a graph, decides whether the graph has a given property or is
$\eps$-far from having it. By sublinear we mean that the number of
queries that the algorithm generates is much smaller than the size of
the graph. In the \emph{general graph model}, a graph, $G=(V,E)$, is $\eps$-far
from satisfying a property if at least $\eps \cdot |E|$ edges should be
added or removed so that the graph will have the property.
Graph property testing in the distributed \congest-model has been initiated recently
by Censor-Hillel et al.~\cite{censor2016fast}. In this setting, each processor
locally gathers in a parallel and synchronized fashion information
from the network while abiding to a logarithmic bandwidth
constraint. When the distributed algorithm terminates, each processor
outputs ACCEPT in case the graph (which acts as the network on which
the processors communicate) has the tested property (for one-sided error testing), and in case the
graph is $\eps$-far from the property, then there is at least one processor that
outputs REJECT with probability at least $2/3$. In the distributed
setting the goal is to design algorithms with small number of rounds.

In this paper we present improved testers to various properties, such
as: is the graph bipartite? Acyclic? Does the graph contain a copy of
predetermined tree of size $k$? or other subgraphs of size at most
$4$?  We obtain these results in three different ways: (1)~we
demonstrate a procedure that given an $\eps$-tester in the \congest-model with linear dependency on the network's diameter reduces it to a
logarithmic dependency in the size of the graph. This technique yields
improved algorithm for testing bipartiteness. (2)~We directly design
improved algorithms for testing whether the graph does not contain a
copy of a subgraph of size at most $4$. An important ingredient of
testing whether a copy of subgraph of size $4$ exists is the ability
to pick u.a.r. a path of length $2$ that emanates from a specific
vertex in the \congest-model. (3)~In the last part we design an
$\eps$-tester for the property of being free from copies of a specific
tree of size $k$. We then simulate this $\eps$-tester with the same
number of rounds as the number of queries of the sequential tester.

\subsection{Related Work}
Property testing in the distributed \congest-model was initiated by
Censor-Hillel et al.~\cite{censor2016fast}. In~\cite{censor2016fast},
the distributed testing model was defined as well as various testing
algorithms of whether a graph is: triangle-free, cycle-free, or
bipartite (i.e., free of odd cycles). Additionally, a logarithmic lower bound was proven for
testing bipartiteness and cycle-freeness. Finally, a simulation of
sequential (or global) testers for a certain class of graph properties
in the dense model is given. This simulation incurs a quadratic
blow-up w.r.t. the tester's number of queries.

Fraigniaud et al.~\cite{fraigniaud2016distributed} studied testing of
excluding subgraphs in the \congest-model. In~\cite{fraigniaud2016distributed} an algorithm for testing
whether a graph does not contain a specific subgraph (or any
isomorphic copy of it) of size $4$. The number of rounds of this
algorithm does not depend on the size of the graph. Fraigniaud et
al.~\cite{fraigniaud2016distributed} also consider the problem of
testing whether a graph excludes subgraphs of size $k\geq 5$ (e.g.,
$C_5$ or $K_5$).  For these properties, they present a ``hard'' family
of graphs for which some ``natural'' property testing algorithms have
a round complexity that depends on the size of the graph.

Our notion of correction is inspired by work on local
reconstruction of graph properties (see for example~\cite{CGR13}
and~\cite{KPS13}).  However, we note that our definition of correction
does not assume anything on the input graph. In particular, we do not
assume that it is close to having the required property.

\subsection{Our Contributions}
We design and analyze distributed testers in the distributed \congest-model all of which work in the general graph model. In this model, a graph $G=(V,E)$ is called $\eps$-far from having a property $\mathcal{P}$, if one must remove or add at least $\eps |E|$ edges from $G$ in order to obtain the property $\mathcal{P}$.

\subparagraph*{Diameter dependency reduction and its Applications.}
In Section~\ref{sec:compiler} we describe a general procedure for
converting $\eps$-testers with $f(D)$ rounds, where $D$ denotes the
diameter of the graph to a $O((\log n)/\eps)+f((\log n)/\eps)$ rounds,
where $n$ is the number of processors of the network. We then apply
this procedure to obtain an $\eps$-tester for testing whether a graph
is bipartite. The improvement of this tester over state of the art is
twofold: (a)~the round complexity is $O(\eps^{-1}\log n)$, which
improves over the $\poly(\eps^{-1} \log n)$-round algorithm by
Censor-Hillel et al.~\cite[Thm.~5.2]{censor2016fast}, and (b)~our
tester works in the general model while~\cite{censor2016fast} works in
the more restrictive bounded degree model. Moreover, the number of
rounds of our bipartiteness tester meets the $\Omega(\log n)$ lower
bound by~\cite[Thm.~7.3]{censor2016fast}, hence our tester is
asymptotically optimal in terms of $n$. We then apply this
``compiler'' to obtain a cycle-free tester with number of rounds of
$O(\eps^{-1}\cdot\log n)$, thus revisiting the result
by~\cite[Thm.~6.3]{censor2016fast}. The last application that we
consider is ``how to obtain a \emph{corrector} of the graph by using
this machinery?'' Namely, how to produce a an algorithm that locally
corrects the graph so that the corrected graph satisfies the
property. For cycle-freeness, we are able to obtain also a corrector.
Note that, unlike a tester, a corrector needs to mend
the graph in many places in the case that the graph is far from having
the property.

\subparagraph*{Testers for $H$-freeness for $|V(H)| \leq 4$.}  In
Section~\ref{sec:hfour} we design algorithms for testing (in the
general graph model) whether the network is $H$-free for any connected $H$
of size up to four with round complexity of $O(\eps^{-1})$. By
$H$-free we mean that there is no sub-graph $H'$ of $G$ such that $H$
is isomorphic to $H$. This improves over the $O(\eps^{-2})$-round
algorithms for testing triangle freeness by Censor-Hillel et
al.~\cite[Thm.~4.1]{censor2016fast} and for testing excluded graphs of
size $4$ by Fraigniaud et
al.~\cite[Thm.~1]{fraigniaud2016distributed}.

\subparagraph*{Testers for tree-freeness.}  In
Section~\ref{sec:bigtrees} we first generalize the global tester by
Iwama and Yoshida~\cite{IwamaY14} of testing $k$-path freeness to
testing the exclusion of any tree, $T$, of order $k$. Note that in
this part of the paper we do not make any assumption of the size of
the tree, e.g., $k$ can be larger than $4$. Our tester has a one sided
error and it works in the general graph model with random edge
queries. We, then, show how to simulate this algorithm in the \congest-model in $O(k^{k^2+1}\cdot\eps^{-k})$ rounds.


\section{Computational Models}
\subparagraph*{Notations.}  Let $G=(V,E)$ denote a graph, were $V$ is
the set of vertices and $E$ is the set of edges. Let $n \triangleq
|V(G)|$, and let $m \triangleq |E(G)|$. For every $v \in V$, let
$N_G(v) \triangleq \{u \in V \mid \{u,v\} \in E\}$ denote the
neighborhood of $v$ in $G$. For every $v \in V$, let $d_G(v)
\triangleq |N_G(v)|$ denote the degree of $v$. When the graph at hand
is clear from the context we omit the subscript $G$.

\subsection{Distributed \congest-Model}
Computation in the distributed \congest-model~\cite{peleg2000distributed} is done as follows.  Let $G=(V,E)$ denote a network where each
vertex is a processor and each edge is a communication link between
its two endpoints.  Each processor is input a local input.  Each
processor $v$ has a distinct ID - for brevity we say that the ID of
processor $v$ is simply $v$.\footnote{In this paper we focus on
  randomized algorithms, hence one can omit the assumption that each
  processor has a distinct ID.}  The computation is synchronized and
is measured in terms of \emph{rounds}. In each round (1)~each
processor does a local computation, and then (2)~sends (different)
messages of $O(\log n)$ bits to each of its neighbors (or a possible
``empty message''). After the last round all the processors stop and
output a local output.

\subsection{(Global) Testing Model}
Graph property
testing~\cite{goldreich1998property,goldreich2002property} is defined
as follows.  Let $G=(V,E)$ denote a graph. We assume that for each $v
\in V$ there is an arbitrary order on its set of neighbors.  Let $\cal
P$ denote a graph property, e.g., the graph is cycle-free, the graph
is bipartite, etc.  We say that a graph $G$ is \emph{$\eps$-far (in
  the general graph model)} from having the property $\cal P$ if at least
$\eps \cdot m$ edges $E'$ should be added or removed from $E(G)$ so
as to obtain the property ${\cal P}$.

An algorithm in this model is given a query access to $G$ of the form:
(1)~what is the degree of $v$ for $v\in V$? (2)~who is $i$th neighbor
of $v \in V$?

We say that an algorithm is a \emph{$\eps$-tester for property ${\cal
    P}$, one sided, in the general graph model} if given query access to the
graph $G$ the algorithm ACCEPTS the graph $G$ if $G$ has the property
$\cal P$, i.e, completeness, and REJECTS the graph $G$ with
probability at least $2/3$ if $G$ is $\eps$-far from having the
property $\cal P$, i.e., soundness.

The complexity measure of this model is the number of queries made to
$G$. Usually, the goal is to design an $\eps$-tester with $o(n)$
number of queries.

In Section~\ref{sec:bigtrees} an additional query type is allowed of
\emph{random edge query} where an edge $e$ is picked u.a.r. from $E$.

\subsection{Distributed Testing in the \congest-model}
Let $G=(V,E)$ be a graph and let ${\cal P}$ denote a graph property.
We say that a randomized distributed \congest\ algorithm is an \emph{$\eps$-tester
  for property ${\cal P}$ in the general graph model}~\cite{censor2016fast}
if when $G$ has the property ${\cal P}$ then \emph{all} the processors
$v \in V$ output ACCEPT, and if $G$ is $\eps$-far from having the
property ${\cal P}$, then there is a processor $v \in V$ that outputs
REJECT with probability at least $2/3$.

\subsection{Distributed Correcting}
In this section we define correction in the distributed setting. We
then explain how to obtain correction for the property of
cycle-freeness.

\begin{definition}
  A graph property $\prop$ is \emph{edge-monotone} if $G\in \prop$ and if $G'$ is
  obtained from $G$ by the removal of edges, then $G'\in \prop$.
\end{definition}

\begin{definition}\label{def:corr}
In the distributed \congest-model, we say that an algorithm is an $\eps$-corrector for an edge-monotone property $\mathcal{P}$ if the following holds.
\begin{enumerate}
\item Let $G = (V,E)$ denote the network's graph. When the algorithm terminates, each processor $v$ knows which edges in $E$, that incident to $v$, are in the set of {\em deleted edges} $E' \subseteq E$.
\item $G(V, E\setminus E')$ is in $\mathcal{P}$.
\item $|E'| \leq dist(G, \mathcal{P}) + \eps |E|$, where $dist(G, \mathcal{P})$ denotes the minimum number of edges that should be removed from $G$ in order to obtain the property $\mathcal{P}$.
\end{enumerate}
\end{definition}



\section{Reducing the Dependency on the Diameter and Applications}\label{sec:compiler}
In this section we present a general technique that reduces the
dependency of the round complexity on the diameter. The technique is
based on graph decompositions defined below.
\begin{definition}[\cite{miller2013parallel}]
 Let $G=(V,E)$ denote an undirected graph.
A $(\beta,d)$-decomposition of $G$ is a partition of $V$ into disjoint subsets $V_1,\ldots,V_k$ such that
 \begin{inparaenum}[(i)]
 \item For all $1\leq i \leq \ell$, $\diam(G[V_i]) \leq d$, where
   $G[V_i]$ is the vertex induced subgraph of $G$ that is induced
   by $V_i$.
 \item The number of edges with endpoints belonging to different subsets is at most $\eps \cdot |E|$.
We refer to these as \emph{cut-edges} of the decomposition.
 \end{inparaenum}
\end{definition}
Note that the diameter constraint refers to strong diameter, in
particular, each induced subgraph $G[V_i]$ must be connected.

Algorithms for $(\eps,(\log n)/\eps)$-decompositions were developed
in many contexts (e.g., parallel
algorithms~\cite{awerbuch1992low,blelloch2014nearly,miller2013parallel}). An
implementation in the \congest-model can be derived from the algorithm in~\cite{elkin2017efficient} for constructing spanners.
Specifically, we get the following as a corollary from~\cite{elkin2017efficient}.
\begin{corollary}
  A $(\eps, O(\log n/\eps))$-decomposition can be computed in the
randomized  \congest-model in $O((\log n)/\eps)$ rounds with probability at least $1-1/\poly(n)$.
\end{corollary}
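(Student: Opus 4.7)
The plan is to invoke the exponential-shift clustering of Miller--Peng--Xu as implemented in the \congest-model by Elkin--Neiman~\cite{elkin2017efficient}, with rate parameter $\beta$ chosen proportional to $\eps$. Concretely, each vertex $v\in V$ draws an independent shift $\delta_v$ from the exponential distribution with rate $\beta = \Theta(\eps)$, and every vertex $u$ is assigned to the centre $c(u) = \arg\max_{v\in V}(\delta_v - \dist_G(u,v))$ (with IDs breaking ties). The resulting clusters $V_1,\ldots,V_k$ are obtained by grouping vertices sharing the same centre.

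The three properties required by the definition are then verified as follows. First, a standard tail bound on the maximum of $n$ i.i.d.\ exponentials gives $\max_v \delta_v = O((\log n)/\eps)$ with probability $1-1/\poly(n)$. This immediately bounds the round complexity: the ball-growing procedure that computes $c(\cdot)$ need only propagate ``offers'' of the form $(v, \delta_v - t)$ along BFS trees for $R = O((\log n)/\eps)$ rounds, since any offer whose residual value has dropped below $0$ can never become winning. Second, one shows that if $u$ is assigned to $v$ and $w$ lies on some shortest $u$--$v$ path in $G$, then $w$ is also assigned to $v$ (apply the triangle inequality to $\delta_{v'} - \dist(w,v')$ for any competing centre $v'$); hence each $G[V_i]$ is connected and has strong diameter at most $2R = O((\log n)/\eps)$. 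Third, by the memorylessness of the exponential distribution, each edge $\{u,w\}\in E$ is a cut edge with probability at most $2\beta = O(\eps)$; tuning the constant in $\beta$ and applying a concentration argument (or simply boosting by running $O(\log n)$ independent trials and keeping a valid decomposition) yields that the total number of cut edges is at most $\eps |E|$ with probability $1-1/\poly(n)$.

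The main obstacle is not the combinatorial analysis but the \congest-model implementation: a naive simulation would require each vertex to forward many competing offers per round, violating the $O(\log n)$-bandwidth constraint. The Elkin--Neiman algorithm resolves this by maintaining, at each vertex and for each edge, only the locally best offer seen so far and transmitting a single message per round; any offer that is dominated locally cannot become the global maximiser downstream, so this pipelining is lossless. Verifying that this implementation terminates within $R$ rounds with the correct assignment, all within the $O(\log n)$ bandwidth, is precisely the content of~\cite{elkin2017efficient}, from which the corollary follows by reading off the clustering it produces.
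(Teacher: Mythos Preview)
Your proposal is correct and matches the paper's approach: the paper simply states the result as a corollary of Elkin--Neiman~\cite{elkin2017efficient} without further argument, noting only that the underlying algorithm is ``based on random exponential shifts,'' which is precisely the Miller--Peng--Xu mechanism you sketch. Your write-up is therefore a faithful elaboration of what the paper leaves implicit; the only quibble is that the parenthetical ``boosting by $O(\log n)$ independent trials'' would inflate the round count beyond the stated $O((\log n)/\eps)$, so you should rely on the concentration/analysis already supplied in~\cite{elkin2017efficient} rather than on repetition.
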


A nice feature of the algorithm based on random exponential shifts is
that at the end of the algorithm, there is a spanning BFS-like rooted
tree $T_i$ for each subset $V_i$ in the decomposition. Moreover, each
vertex $v\in V_i$ knows the center of $T_i$ as well as its parent in
$T_i$. In addition, every vertex knows which of the edges
incident to it are cut-edges.

The following definition captures the notion of connected witnesses
against a graph satisfying a
property.
 \begin{definition}[\cite{censor2016fast}]
  A graph property $\prop$ is \emph{non-disjointed} if for every witness
  $G'$ against $G\in \prop$, there exists an induced subgraph $G''$ of
  $G'$ that is connected such that $G''$ is also a witness against
  $G\in \prop$.
\end{definition}

The main result of this section is formulated in the following
theorem. We refer to a distributed algorithm in which all vertices
accept iff $G\in\prop$ as a \emph{verifier} for $\prop$.
\begin{theorem}\label{thm:sim}
  Let $\prop$ be an edge-monotone non-disjointed graph property that can
  be verified in the \congest-model in $O(\diam(G))$ rounds.  Then
  there is an $\eps$-tester for $\prop$ in the randomized \congest-model
  with $O((\log n)/\eps)$ rounds.
\end{theorem}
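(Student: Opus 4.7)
My plan is to use the decomposition from the preceding corollary as the backbone: first compute a $(\eps, O(\log n/\eps))$-decomposition $V_1,\ldots,V_k$ in $O((\log n)/\eps)$ rounds, then have every vertex locally discard (or ``freeze'') its cut-edges. Since each vertex already learns, as a byproduct of the decomposition algorithm, which incident edges are cut-edges and the BFS-like tree $T_i$ of its cluster, the vertices in each $V_i$ can simulate the $O(\diam)$-round verifier for $\prop$ on the induced subgraph $G[V_i]$, communicating only along non-cut-edges. Because $\diam(G[V_i]) = O((\log n)/\eps)$, each simulated verifier terminates in $O((\log n)/\eps)$ rounds, and all clusters run in parallel. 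The distributed tester accepts iff the verifier accepts in every cluster; as an implementation detail, a single accept/reject bit can be aggregated within each cluster via $T_i$.

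For completeness, suppose $G\in\prop$. Let $G' \triangleq G\setminus E_{\mathrm{cut}}$ be the graph obtained by removing all cut-edges, so $G'$ is the disjoint union of the $G[V_i]$. By edge-monotonicity $G'\in\prop$, and since each $G[V_i]$ is itself obtained from $G$ by deleting edges, edge-monotonicity gives $G[V_i]\in\prop$ as well. Hence every simulated verifier accepts and the tester accepts at every vertex.

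For soundness, assume toward contradiction that $G$ is $\eps$-far from $\prop$ but that, conditioned on the decomposition succeeding (which happens with probability $1-1/\poly(n)$), every $G[V_i]\in\prop$. Then $G'$, being a disjoint union of graphs in $\prop$, must itself be in $\prop$: if not, $G'$ would admit a witness against $\prop$, and by the non-disjointed property it would admit a \emph{connected} induced witness $G''$, which is contained in a single cluster $G[V_i]$ and thus would contradict $G[V_i]\in\prop$. But then $G$ can be made to satisfy $\prop$ by removing only the at most $\eps|E|$ cut-edges, contradicting $\eps$-farness. Therefore some $G[V_i]\notin\prop$, its verifier rejects, and at least one vertex outputs \textsc{reject}; the overall failure probability is at most $1/\poly(n)+0\le 1/3$.

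The main subtle point is ensuring that the verifier on $G[V_i]$ really does run in $O((\log n)/\eps)$ rounds when simulated inside $G$: this is where we rely crucially on the fact that the decomposition gives \emph{strong} diameter and that each vertex locally knows which of its edges to ignore, so no global coordination is needed to restrict communication to $G[V_i]$. Adding up the $O((\log n)/\eps)$ rounds for constructing the decomposition and the $O((\log n)/\eps)$ rounds for running the verifiers yields the claimed round complexity.
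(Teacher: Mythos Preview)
Your argument is essentially the paper's: compute the decomposition, drop the cut-edges, run the verifier in each cluster, and use edge-monotonicity for completeness and the connected-witness consequence of ``non-disjointed'' for soundness. The only slip is a boundary issue: you invoke an $(\eps, O((\log n)/\eps))$-decomposition, so the number of cut-edges is only guaranteed to be \emph{at most} $\eps|E|$, while $\eps$-farness means one needs \emph{at least} $\eps|E|$ edge modifications---removing exactly $\eps|E|$ cut-edges to land in $\prop$ does not yield a contradiction. The paper avoids this by taking the decomposition parameter to be $\eps' = \eps/2$, so the cut-edges number at most $\eps m/2$ and the soundness contradiction is strict; with that one-line fix your proof matches the paper's.
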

\begin{proof}
  The algorithm tries to ``fix'' $G$ so that it satisfies $\prop$ by
  removing less than $\eps\cdot m $ edges.  The algorithm consists of
  two phases. In the first phase, an $(\eps',O((\log n)/\eps')$
  decomposition is computed in $O((\log n)/\eps')$ rounds, for
  $\eps'=\eps/2$.  The algorithm removes all the cut-edges of the
  decomposition. (There are at most $\eps\cdot m/2$ such edges.)  In the
  second phase, in each subgraph $G[V_i]$, an independent execution of the
  verifier algorithm for $\prop$ is executed. The number of rounds of the
  verifier in $G[V_i]$ is $O(\diam(G[V_i])=O((\log n)/\eps)$.

  We first prove completeness. Assume that $G\in \prop$.  Since $\prop$ is an
  edge-monotone property, the deletion of the cut-edges does not
  introduce a witness against $\prop$.  This implies that each induced
  subgraph $G[V_i]$ does not contain a witness against $\prop$, and hence
  the verifier do not reject, and every vertex accepts.

  We now prove soundness. If $G$ is $\eps$-far from $\prop$, then after the
  removal of the cut-edges (at most $\eps m/2$ edges) property $\prop$ is
  still not satisfied. Let $G'$ be a witness against the remaining
  graph satisfying $\prop$. Since property $\prop$ is non-disjointed, there
  exists a connected witness $G''$ in the remaining graph. This witness
  is contained in one of the subgraphs $G[V_i]$, and therefore, the
  verifier that is executed in $G[V_i]$ will reject, hence at least
  one vertex rejects, as required.
\end{proof}
We remark that if the round complexity of the verifier is
$f(\diam(G),n)$ (e.g., $f(\Delta,n)=\Delta+\log n$), then the round
complexity of the $\eps$-tester is $O((\log n)/\eps)+f((\log
n)/\eps,n)$. This follows directly from the proof.

\if\arxiv=0
\paragraph*{Extensions to $\eps$-Testers}
\else
\paragraph*{Extensions to $\eps$-Testers.}
\fi
The following ``bootstrapping'' technique can be applied.  If there
exists an $\eps$-tester in the \congest-model with round complexity
$O(\diam(G))$, then there exists an $\eps$-tester with round
complexity $O((\log n)/\eps)$. The proof is along the same lines,
expect that instead of a verifier, an $\eps'$-tester is executed in
each subgraph $G[V_i]$. Indeed, if $G$ is $\eps$-far from $\prop$,
then, by an averaging argument, there must exist a subset $V_i$ such that $G[V_i]$ is $\eps'$-far
from $\prop$. Otherwise, we could ``fix'' all the parts by deleting at
most $\eps' \cdot m$ edges, and thus ``fix'' $G$ by deleting at most
$2\eps'\cdot m =\eps m$ edges, a contradiction.

\subsection{Testing Bipartiteness}
Theorem~\ref{thm:sim} can be used to test whether a graph is
bipartite or $\eps$-far from being bipartite.  A verifier for
bipartiteness can be obtained by attempting to $2$-color the vertices
(e.g., BFS that assigns alternating colors to layers). In our special
case, each subgraph $G[V_i]$ has a root which is the only vertex that
initiates the BFS. In the general case, one would need to deal with
``collisions'' between searches, and how one search ``kills'' the other
searches initiated by vertices of lower ID.

\subsection{Testing Cycle-freeness}
Theorem~\ref{thm:sim} can be used to test whether a graph is
acyclic or $\eps$-far from being acyclic.  As in the case of
bipartiteness, any scan (e.g., DFS, BFS) can be applied. A second
visit to a vertex indicates a cycle, in which case the vertex rejects.

\begin{corollary}
  There exists an $\eps$-tester in the randomized \congest-model for
  bipartiteness and cycle-freeness with round complexity $O((\log n)/\eps)$.
\end{corollary}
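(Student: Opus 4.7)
The plan is to invoke Theorem~\ref{thm:sim} separately for bipartiteness and for cycle-freeness, which reduces the task to checking three things per property: edge-monotonicity, non-disjointedness, and the existence of an $O(\diam(G))$-round verifier.

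Both properties are clearly edge-monotone. Deleting an edge from a bipartite graph preserves any valid $2$-coloring, and deleting an edge from a forest leaves a forest. Both are non-disjointed as well: a witness against bipartiteness must contain an odd cycle, which, taken as an induced subgraph on its own vertex set, is connected and itself a witness; analogously, any witness against cycle-freeness contains a cycle, which is connected.

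For the verifiers, the plan is to exploit the fact that the decomposition underlying Theorem~\ref{thm:sim} already equips every cluster $V_i$ with a BFS-like spanning tree $T_i$ rooted at a known center $c_i$, in which each vertex knows its parent, and $\diam(G[V_i]) = O((\log n)/\eps)$. For bipartiteness, $c_i$ broadcasts color $0$ down $T_i$; each other vertex adopts the color opposite to its parent's. After $O(\diam(G[V_i]))$ rounds every vertex has a color, exchanges it with its neighbors in $V_i$ in one more round, and rejects iff some neighbor shares its color. For cycle-freeness, since $T_i$ spans the connected subgraph $G[V_i]$, acyclicity of $G[V_i]$ is equivalent to $T_i = G[V_i]$; each vertex checks locally, in a single round, that all of its edges within $V_i$ are tree edges of $T_i$, using the parent/cut-edge information already produced by the decomposition.

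Plugging these verifiers into Theorem~\ref{thm:sim} (or into the variant for verifiers of round complexity $f(\diam(G),n)$ noted immediately after its proof) yields the claimed $O((\log n)/\eps)$-round $\eps$-tester for each of the two properties. No single step is delicate; the only design choice requiring care is to route both verifiers through the tree $T_i$ supplied by the decomposition, which keeps the verification local to a cluster and sidesteps any need to handle collisions between independently initiated searches of different vertices, as already noted in the preceding subsection.
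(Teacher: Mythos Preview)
Your argument is correct and follows essentially the same route as the paper: verify edge-monotonicity and non-disjointedness, then supply $O(\diam)$-round verifiers inside each cluster via the rooted tree $T_i$ produced by the decomposition, and apply Theorem~\ref{thm:sim}. The only cosmetic difference is that for cycle-freeness the paper phrases the verifier as a scan that rejects upon a second visit, whereas you equivalently observe that $G[V_i]$ is acyclic iff it coincides with its spanning tree $T_i$, which each vertex can check locally after one exchange of parent information; both are the same test.
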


\subsection{Corrector for Cycle-Freeness}
Our $\eps$-testers for testing cycle freeness can be easily converted
into $\eps$-correctors by removing the following edges: (1)~All the
cut-edges are removed.  (2)~In each $G[V_i]$, all the edges which are
not in the BFS-like spanning tree $T_i$ are removed (in order to maintain
consistency with the same spanning tree one needs to define a
consistent way for breaking ties, e.g., by the ID of the vertices).

Therefore, the total number of edges that we keep is at most $|V| + \eps|E|$.

\begin{theorem}
  There exists an $\eps$-corrector for cycle-freeness in the randomized
  \congest-model with round complexity $O((\log n)/\eps)$.
\end{theorem}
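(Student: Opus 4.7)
The plan is to verify that the algorithm sketched in the preceding paragraph meets the three requirements of Definition~\ref{def:corr}. The round complexity is immediate from the decomposition corollary; conditions~(1) and~(2) will follow essentially from what each vertex already knows at the end of the decomposition; and the real work is the distance bound in condition~(3), which I expect to be the crux of the proof.

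For the round complexity and local knowledge, the plan is to invoke the $(\eps, O((\log n)/\eps))$-decomposition, which runs in $O((\log n)/\eps)$ rounds and already leaves each vertex $v$ knowing (a)~its parent in the BFS-like tree $T_i$ of its part and (b)~which of its incident edges are cut-edges. One additional round, in which every non-root vertex notifies its parent that the corresponding edge is a tree edge, lets $v$ determine precisely which of its incident edges survive and which lie in $E'$, thereby establishing condition~(1). Condition~(2) is then immediate: the surviving edges form the disjoint union of the spanning trees $\{T_i\}$, since all inter-part edges are cut-edges and are removed, and within each part only the tree edges of $T_i$ are kept; hence $G(V,E\setminus E')$ is a forest and in particular cycle-free.

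The main step is condition~(3). Letting $k$ denote the number of parts and $c$ the number of connected components of $G$, the number of surviving edges is $\sum_i(|V_i|-1)=n-k$, so $|E'|=|E|-n+k$, while any acyclic spanning subgraph of $G$ keeps at most $n-c$ edges and hence $\dist(G,\prop)=|E|-n+c$. It therefore suffices to establish $k-c\le\eps|E|$, which is the only non-routine point. The idea is a standard component-merging argument: the decomposition by itself has exactly $k$ connected components (one per part, thanks to the strong-diameter guarantee that each $G[V_i]$ is connected), and re-inserting the at most $\eps|E|$ cut-edges reconstructs $G$, which has $c$ components; since each inserted edge can decrease the number of components by at most one, $c\ge k-\eps|E|$, giving $k-c\le\eps|E|$ as required. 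One minor bookkeeping point I would double-check is whether the decomposition is best invoked with parameter $\eps$ or $\eps/2$ to match the constants in Definition~\ref{def:corr}; either choice fits comfortably into the stated slack and leaves the $O((\log n)/\eps)$ round complexity intact.
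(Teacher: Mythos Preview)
Your proposal is correct and follows exactly the paper's approach: compute the $(\eps,O((\log n)/\eps))$-decomposition, delete all cut-edges, and within each part keep only the edges of the BFS-like spanning tree~$T_i$. Your treatment of condition~(3)---reducing it to $k-c\le\eps|E|$ and establishing the latter via the component-merging argument---is in fact more careful than the paper's one-line justification, which only records the (loose) bound that at most $|V|+\eps|E|$ edges are kept.
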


\section{Testing $H$-freeness in $\Theta(1/\eps)$ Rounds for $|V(H)|\leq 4$}\label{sec:hfour}

\subsection{Testing Triangle-freeness}
In this section we present an $\eps$-tester for triangle-freeness that
works in the \congest-model. The number of rounds is $O(1/\eps)$.

Consider a violating edge $\{A, B\}$ and a corresponding  triangle $ABC$ in the graph $G=(V,E)$. This triangle can be
detected if $A$ tells $B$ about a neighbor $C\in N(A)$ with the hope
that $C$ is also a neighbor of $B$.  Vertex $B$ checks that $C$ is
also its neighbor, and if it is, then the triangle $ABC$ is
detected. Hence, $A$ would like to send to $B$ the name of a vertex
$C$ such that $C\in N(A)\cap N(B)$. Since $A$ can discover $N(A)$ in a
single round, it proceeds by telling $B$ about a neighbor $C\in
N(A)\setminus\{B\}$ chosen uniformly at random. Let $\msg{A}{B}$
denote the random neighbor that $A$ reports to $B$.  A listing of the
distributed $\eps$-tester for triangle-freeness appears as
Algorithm~\ref{alg:triangletest}. Note that all the messages
$\{\msg{A}{B}\}_{(A,B)\in E}$ are independent, and that the messages
are re-chosen for each iteration.

\begin{claim}\label{claim:single triangle}
Let $\{A, B\}$ be a violating edge, then $\Prr{\msg{A}{B}\in N(B)}\geq 1/m$.
\end{claim}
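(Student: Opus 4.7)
The plan is to directly compute the probability by identifying the favorable events for $\msg{A}{B}$. Since $\{A,B\}$ is a violating edge, by definition some triangle $ABC$ uses this edge, so there is at least one vertex $C \in N(A) \cap N(B)$. The message $\msg{A}{B}$ is drawn uniformly at random from $N(A) \setminus \{B\}$, which is a set of size $d(A) - 1$ (well-defined and positive since $A$ has at least the two neighbors $B$ and $C$).

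Next, I would express the probability exactly. Because the graph is simple, $B \notin N(B)$, hence $N(A) \cap N(B) \subseteq N(A) \setminus \{B\}$, and
\[
\Prr{\msg{A}{B} \in N(B)} \;=\; \frac{|N(A)\cap N(B)|}{d(A)-1}.
\]
The numerator is at least $1$ by the existence of $C$, so the probability is at least $1/(d(A)-1)$.

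Finally, I would bound this denominator in terms of $m$. The $d(A)$ edges incident to $A$ are distinct edges of $G$, so $d(A) \leq m$, giving $1/(d(A)-1) \geq 1/(m-1) \geq 1/m$. Chaining these inequalities yields the claim. There is no real obstacle here; the only subtlety is to notice that $B$ is excluded from the neighbor choice (so the denominator is $d(A)-1$ rather than $d(A)$) and to justify that $A$ has at least one other neighbor so the uniform choice is well-defined, both of which follow from the triangle containing $\{A,B\}$.
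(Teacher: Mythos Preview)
Your proof is correct and follows essentially the same approach as the paper: fix a triangle $ABC$ witnessing the violation, note $C\in N(A)\cap N(B)$, and bound the probability that the uniform neighbor sent from $A$ to $B$ equals $C$ by $1/\deg(A)\geq 1/m$. You are in fact slightly more careful than the paper, using the correct denominator $d(A)-1$ (since $B$ is excluded) and justifying the containment $N(A)\cap N(B)\subseteq N(A)\setminus\{B\}$.
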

\begin{proof}
Let $ABC$ be a triangle which is a witness for the violation of $\{A, B\}$.
Since $ABC$ is a triangle, $C\in N(A)\cap N(B)$, and $\Prr{\msg{A}{B}\in N(B)}\geq 1/\deg(A) \geq 1/m$.
\end{proof}


\begin{theorem}
	Algorithm~\ref{alg:triangletest} is an $\eps$-tester for triangle-freeness.
\end{theorem}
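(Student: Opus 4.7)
The plan is to verify completeness and soundness separately, with the latter being the substantive direction. For completeness, I would simply observe that the algorithm only causes a vertex $B$ to reject when the message $M_{A\to B}$ it receives lies in $N(B)$; but the triple $(A,B,M_{A\to B})$ always forms a triangle in $G$ in that case, so if $G$ is triangle-free, no rejection can ever occur and all vertices accept.

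For soundness, suppose $G$ is $\eps$-far from triangle-free. First I would lower bound the number of \emph{violating edges}, i.e., edges that participate in at least one triangle, by $\eps m$: if fewer than $\eps m$ edges lay on triangles, removing all of them would yield a triangle-free graph, contradicting $\eps$-farness. Then I would invoke Claim~\ref{claim:single triangle} to say that for each violating edge $\{A,B\}$, a triangle gets detected via $\{A,B\}$ in a single iteration with probability at least $1/m$.

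The crux is the independence step. Since the messages $\{M_{A\to B}\}_{(A,B)\in E}$ are mutually independent across \emph{directed} edges, the detection events associated with distinct \emph{undirected} violating edges $\{A,B\}$ and $\{A',B'\}$ depend on disjoint sets of random variables, hence they are mutually independent. Thus the probability that no violating edge triggers a detection in a single iteration is at most $(1 - 1/m)^{\eps m} \leq e^{-\eps}$. Since the algorithm re-samples its messages over $\Theta(1/\eps)$ iterations, and the iterations are independent, the overall failure probability is at most $e^{-\Theta(1)}$, which can be made smaller than $1/3$ by choosing a sufficiently large constant in the number of iterations.

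The main potential obstacle is the independence argument: one must be careful that the detection events are really mutually independent across different violating edges, which hinges on using the fact that the event for $\{A,B\}$ depends only on $M_{A\to B}$ (and symmetrically on $M_{B\to A}$), and that these directed-edge messages are disjoint across undirected edges. Aside from this, the calculation is routine; one could even get a slightly stronger bound by using both directed edges of each violating edge simultaneously, but the one-sided bound from Claim~\ref{claim:single triangle} already suffices to achieve $O(1/\eps)$ rounds.
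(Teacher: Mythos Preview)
Your proposal is correct and follows essentially the same approach as the paper: both establish $\eps m$ violating edges, invoke Claim~\ref{claim:single triangle} for a per-edge detection probability of at least $1/m$, use independence of the messages $\{M_{A\to B}\}$ to bound the single-iteration failure probability by $(1-1/m)^{\eps m}$, and amplify over $\Theta(1/\eps)$ iterations. If anything, you are more explicit than the paper about why the detection events across distinct violating edges are independent; the paper simply asserts the $(1-1/m)^{\eps m}$ bound after noting that the messages are independent.
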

\begin{proof}
	\emph{Completeness:} If $G$ is triangle free then Line~\ref{line:rejtriangle} is never satisfied, hence for every $v$ Algorithm~\ref{alg:triangletest} terminates at Line~\ref{line:acctriangle}.
	
	\emph{Soundness:} Let $G=(V,E)$ be a graph which is $\eps$-far
        from being triangle free. Therefore there exist at least $\eps\cdot m$ edges, each belonging to at least one triangle. Hence, the probability of not detecting any
        of these triangles in a single iteration is at most
        $(1-1/m)^{\eps m}$. The reject probability is amplified to
        $2/3$ by setting the number of iterations to be $\Theta(1/\eps)$.
\end{proof}

\begin{algorithm}\small
\DontPrintSemicolon
Send $v$ to all $u\in N(v)$
\tcp*{1st round: each $v$ learns $N(v)$}
\For{$t= \Theta(1/\eps)$ times}{
For all $u \in N(v)$, simultaneously:
send $u$ the message $\msg{v}{u} \sim U(N(v)\setminus \{u\})$.\;
If $\exists w\in N(v)$ such that $\msg{w}{v}\in N(v)$ then \KwRet{REJECT}\nllabel{line:rejtriangle}\;
}
\KwRet{ACCEPT}\nllabel{line:acctriangle}\;
\caption{Triangle-free-test$(v)$\label{alg:triangletest}}
\end{algorithm}

\subsection{Testing $C_4$-freeness in $\Theta(1/\eps)$ Rounds}
In this section we present an $\eps$-tester  in the \congest-model for $C_4$-freeness that
runs in $O(1/\eps)$ rounds.

\if\arxiv=0
\paragraph*{Uniform Sampling of $2$-paths}
\else
\paragraph*{Uniform Sampling of $2$-paths.}
\fi
Let $P_2(v)$ denote the set of all paths of length $2$ that start at
$v$.  The algorithm is based on the ability of each vertex $v$ to
uniformly sample a path from $P_2(v)$.  How many paths in $P_2(v)$
start with the edge $(v,w)$? Clearly, there are $(\deg(w)-1)$ such
paths.  Hence the first edge should be chosen according to the degree
distribution over $N(v)$ defined by $\pi^v(w)\triangleq
(\deg(w)-1)/\sum_{x\in N(v)} (\deg(x)-1)$.  Moreover, for each $x\in
N(w)\setminus \{v\}$, the (directed) edge $(w,x)$ appears exactly once as the
second edge of a path in $P_2(v)$.  Hence, given the first edge, the
second edge is chosen uniformly.

This implies that $v$ can pick a random path $p\in P_2(v)$ as follows:
(1)~Each neighbor $w\in N(v)$ sends $v$ a uniformly randomly chosen
neighbor $B_v(w)\in N(w)\setminus \{v\}$.  The edge $(w,B_v(w))$ is a
candidate edge for the second edge of $p$.  (2)~$v$ picks a neighbor
$A(v)\in N(v)$ where $A(v) \sim \pi^v$. The random path $p$ is
$p=\langle v,A(v),B_v(A(v)) \rangle$, and it is uniformly distributed over $P_2(v)$.

In the algorithm, vertex $v$ reports a path to each neighbor. We
denote by $p_u(v)$ the path in $P_2(v)$ that $v$ reports to $u\in
N(v)$. This is done by independently picking neighbors $A_u(v)\in
N(v)$, where each $A_u(v)\sim \pi^v$. Hence, the path that $v$ reports
to $u$ is $p_u(v) \triangleq \langle v,A_u(v),B_v(A_u(v)) \rangle$
Algorithm~\ref{alg:htest} uses this process for reporting paths of
length $2$. Interestingly, these paths are not independent, however
for the case of edge disjoint copies of $C_4$, their ``usefulness'' in
detecting copies of $C_4$ turns out to be independent (see
Lemma~\ref{lemma:ind}).

\if\arxiv=0
\paragraph*{Detecting a Cycle}
\else
\paragraph*{Detecting a Cycle.}
\fi
Consider a copy $C=(v,w,x,u)$ of $C_4$ in $G$. If the $2$-path $p_u(v)$ that $v$
reports to $u$ is $p_u(v)=(v,w,x)$, then $u$ can check whether the
last vertex $x$ in $p_u(v)$ is also in $N(u)$. If $x\in N(u)$, then the
copy $C$ in $G$ of $C_4$ is detected. (The vertex $u$ also needs to verify that $w\neq u$.)

\if\arxiv=0
\paragraph*{Description of the Algorithm}
\else
\paragraph*{Description of the Algorithm.}
\fi
The $\eps$-tester for $C_4$-freeness is listed as
Algorithm~\ref{alg:htest}.  In the first round, each vertex $v$ learns
its neighborhood $N(v)$ and the degree of each neighbor.  The for-loop
repeats $t=O(1/\eps)$ times. Each iteration consists of three
rounds. In the first round, $v$ independently draws fresh values for
$A_u(v)$ and $B_u(v)$ for each of its neighbors $u\in N(v)$, and sends
$B_u(v)$ to $u$.  In the second round, for each neighbor $u\in N(v)$,
$v$ sends the path $\langle v,A_u(v), B_v(A_u(v))\rangle$.  In the
third round, $v$ checks if it received a path $\langle w,a,b\rangle$
for a neighbor $w\in N(v)$ where $a\neq v$ and $b\in N(v)$. If this
occurs, then $(v,w,a,b)$ is a copy of $C_4$, and vertex $v$ rejects.
If $v$ did not reject in all the iterations, then it finally accepts.

\if\arxiv=0
\paragraph*{Analysis of the Algorithm}
\else
\paragraph*{Analysis of the Algorithm.}
\fi
\begin{definition}
We say that $p_u(v)$ is a \emph{success} (wrt $C=(v,w,x,u)$) if $p_u(v)=(v,w,x)$.
Let $I_{v,u}$ denote the indicator variable of the event that $p_u(v)$ is a success.
\end{definition}

\begin{lemma}\label{lemma:ind}
Let $\{C^j(v_j,w_j,x_j,u_j)\}_{j\in J}$ denote a set of edge-disjoint copies of $C_4$ in $G$.
Then the random variables $I_{v_j,u_j}$ are independent.
\end{lemma}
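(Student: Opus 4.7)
The plan is to write each indicator $I_{v_j, u_j}$ as a function of only two random draws made by the algorithm, and then to use edge-disjointness to argue that across different $j$'s these draws never collide. Unpacking the definition $p_{u_j}(v_j) = \langle v_j, A_{u_j}(v_j), B_{v_j}(A_{u_j}(v_j))\rangle$, the event ``$p_{u_j}(v_j) = (v_j, w_j, x_j)$'' is equivalent to the conjunction $\{A_{u_j}(v_j) = w_j\} \cap \{B_{v_j}(w_j) = x_j\}$, so $I_{v_j, u_j}$ is a deterministic function of the pair $(A_{u_j}(v_j),\, B_{v_j}(w_j))$. Note that in this equivalence the inner $B$-value is $B_{v_j}(w_j)$ rather than $B_{v_j}(A_{u_j}(v_j))$, because the indicator forces $A_{u_j}(v_j) = w_j$ anyway.

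Next I will identify the (picker, destination) indexing of these two variables: $A_{u_j}(v_j)$ is drawn by $v_j$ with destination index $u_j$, so it is associated with the directed edge $(v_j, u_j)$, while $B_{v_j}(w_j)$ is drawn by $w_j$ with destination index $v_j$, associated with $(w_j, v_j)$. The two corresponding undirected edges $\{v_j, u_j\}$ and $\{w_j, v_j\}$ both lie in $C^j$. Hence by edge-disjointness of $\{C^j\}_{j \in J}$, for $j \neq k$ one has $(v_j, u_j) \neq (v_k, u_k)$ and $(w_j, v_j) \neq (w_k, v_k)$, since otherwise two distinct copies would share an undirected edge.

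Since by construction the algorithm draws all $A$-variables for distinct (picker, destination) indices independently, likewise for all $B$-variables, and $A$-variables independently from $B$-variables, the collection $\{(A_{u_j}(v_j),\, B_{v_j}(w_j))\}_{j \in J}$ consists of mutually independent pairs. Mutual independence of the indicators $\{I_{v_j, u_j}\}_{j \in J}$ then follows because each is a deterministic function of its own pair. The step that requires the most care is the bookkeeping of indices: one must check that vertex-overlap between different copies of $C_4$, which edge-disjointness permits, cannot cause two same-family variables to collapse to the same draw. This is resolved by observing that each draw is labeled by an ordered (picker, destination) pair, and distinct underlying edges of the $C^j$'s always yield distinct such labels.
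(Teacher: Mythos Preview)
Your proof is correct and follows the same approach as the paper: express $I_{v_j,u_j}$ as the event $\{A_{u_j}(v_j)=w_j\}\cap\{B_{v_j}(w_j)=x_j\}$, observe that these random variables are indexed by directed edges of $C^j$, and invoke edge-disjointness together with the independence of all the $A$- and $B$-draws. The paper's proof is terser (it simply states that $A_u(v)$ and $B_v(w)$ are assigned to directed edges and that edge-disjointness finishes the argument), whereas you spell out the index bookkeeping explicitly; but the content is identical.
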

\begin{proof}
  The event $I_{v,u}=1$ occurs iff $A_u(v)=w$ and $B_v(w)=x$. Both
  $A_u(v)$ and $B_v(w)$ are random variables assigned to (directed)
  edges. By construction, all the random variables $\{A_u(v)\}_{(u,v\in E} \cup\{
  B_v(w)\}_{(v,w)\in E}$ are independent. Since the cycles are
  edge-disjoint, the lemma follows.
\end{proof}

\begin{claim}\label{claim:single C}
  $\Pr[I_{v,u}=1 \mid C] \geq 1/(2m)$.
\end{claim}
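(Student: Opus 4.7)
The plan is to unwind the two-stage sampling defining $p_u(v)$ and to show that the success probability is exactly $1/\sum_{z\in N(v)}(\deg(z)-1)$, which is at least $1/(2m)$ by a handshake-lemma argument.

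First I would recall that $I_{v,u}=1$ iff $A_u(v)=w$ and $B_v(A_u(v))=x$, i.e., iff $A_u(v)=w$ and $B_v(w)=x$. By construction, $A_u(v)$ is drawn from the degree-biased distribution $\pi^v$ on $N(v)$, and $B_v(w)$ is drawn uniformly from $N(w)\setminus\{v\}$, independently of $A_u(v)$. Since $(v,w,x,u)$ is a $C_4$ we have $w\in N(v)$ (so $\pi^v(w)$ is well defined) and $x\in N(w)\setminus\{v\}$ (the cycle visits distinct vertices). Therefore
\[
\Pr[I_{v,u}=1\mid C]\;=\;\pi^v(w)\cdot\frac{1}{\deg(w)-1}
\;=\;\frac{\deg(w)-1}{\sum_{z\in N(v)}(\deg(z)-1)}\cdot\frac{1}{\deg(w)-1}
\;=\;\frac{1}{\sum_{z\in N(v)}(\deg(z)-1)}.
\]

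Next I would bound the denominator. Since $\sum_{z\in N(v)}(\deg(z)-1)\le \sum_{z\in N(v)}\deg(z)\le \sum_{z\in V}\deg(z)=2m$, the probability is at least $1/(2m)$, as claimed.

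The only subtle point, and the one I would flag, is that $\deg(w)\ge 2$ (so that the division by $\deg(w)-1$ is legitimate and $\pi^v(w)>0$): this is guaranteed because $w$ lies on the cycle $C$, so $w$ has at least two distinct neighbors $v$ and $x$ in $G$. Everything else is a direct calculation, so no real obstacle is expected.
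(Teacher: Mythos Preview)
Your proof is correct and follows exactly the same approach as the paper: factor $\Pr[I_{v,u}=1]$ as $\Pr[A_u(v)=w]\cdot\Pr[B_v(w)=x]$ using independence, cancel the $\deg(w)-1$ factor, and bound the remaining denominator by $2m$. Your write-up is in fact slightly more careful than the paper's (you make the handshake-lemma bound explicit and note $\deg(w)\ge 2$), but there is no substantive difference.
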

\begin{proof}
  The path $p_u(v)$ equals $(v,w,x)$ iff $A_u(v)=w$ and $B_v(w)=x$. As
  $A_u(v)$ and $B_v(w)$ are independent, we obtain
  \begin{align*}
    \Prr{I_{v,u}=1 \mid C}&=\Prr{A_u(v)=w|C} \cdot \Prr{B_v(w)=x\mid C}\\
&=\frac{\deg(w)-1}{\sum_{x\in N(v)} (\deg(x)-1)} \cdot \frac {1}{\deg(w)-1}\geq \frac{1}{2m}.
  \end{align*}
\end{proof}

\begin{claim}\label{claim:cover}
  If a graph $G$ is $\eps$-far from being $C_4$-free, then it
  contains at least $\eps\cdot m /4$ edge-disjoint copies of $C_4$.
\end{claim}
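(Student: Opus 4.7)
The plan is a direct maximality-plus-covering argument; it mirrors the standard reduction from $\eps$-farness to the existence of many edge-disjoint witnesses.

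First I would let $\mathcal{F}$ denote a \emph{maximum} collection of pairwise edge-disjoint copies of $C_4$ in $G$, and write $E(\mathcal{F}) \triangleq \bigcup_{C \in \mathcal{F}} E(C)$, so $|E(\mathcal{F})| = 4|\mathcal{F}|$ (edge-disjointness). The key observation is a maximality property: every copy $C$ of $C_4$ in $G$ must share at least one edge with some $C' \in \mathcal{F}$. Otherwise, $\mathcal{F} \cup \{C\}$ would still be an edge-disjoint collection of $C_4$'s, contradicting the maximality of $\mathcal{F}$.

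Next I would use this covering property to bound the distance to $C_4$-freeness. Consider the subgraph $G' = (V, E \setminus E(\mathcal{F}))$. By the observation above, every $C_4$ of $G$ uses at least one edge of $E(\mathcal{F})$, so $G'$ is $C_4$-free. Hence the edit distance of $G$ to the property is at most $|E(\mathcal{F})| = 4|\mathcal{F}|$.

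Finally, the assumption that $G$ is $\eps$-far from $C_4$-free means this distance is at least $\eps \cdot m$. Combining, $4|\mathcal{F}| \geq \eps \cdot m$, i.e., $|\mathcal{F}| \geq \eps m / 4$, which is exactly the claim.

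There is no real obstacle here; the only subtlety is making sure that ``sharing at least one edge'' is enough to justify that removing $E(\mathcal{F})$ kills every $C_4$ (true because a $C_4$ with one edge removed is no longer a $C_4$). Everything else is just plugging definitions together.
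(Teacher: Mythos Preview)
Your argument is correct and essentially identical to the paper's: the paper phrases it as an iterative greedy procedure (``while a $C_4$ exists, delete its four edges''), which produces a maximal edge-disjoint family, and then observes that at least $\eps m$ edges were removed. Your use of a maximum family instead of a greedily built maximal one is immaterial; the covering step and the arithmetic are the same.
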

\begin{proof}
  Consider the following procedure for ``covering'' all the copies of $C_4$:
  while the graph contains a copy of $C_4$, delete all four edges of the
  copy.  When the procedure ends, the remaining graph is
  $C_4$-free, hence at least $\eps m$ edges were removed. The set of
  deleted copies of $C_4$ is edge disjoint and hence contains at least $\eps
  m/4$ copies of $C_4$.
\end{proof}

\begin{theorem}
  Algorithm~\ref{alg:htest} is an $\eps$-tester for
  $C_4$-freeness. The round complexity of the algorithm is
  $\Theta(1/\eps)$ and in each round no more than $O(\log n)$ bits are
  communicated along each edge.
\end{theorem}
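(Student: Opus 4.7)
My plan is to verify completeness, soundness, and the resource bounds in turn, with soundness being the main content.

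For completeness, I would argue that whenever a vertex $u$ rejects after receiving a path $\langle w,a,b\rangle$ from a neighbor $w$ with $a\neq u$ and $b\in N(u)$, the four vertices $u,w,a,b$ actually form a copy of $C_4$ in $G$: the edges $\{u,w\},\{w,a\},\{a,b\},\{b,u\}$ exist by construction, and all four vertices are pairwise distinct, since $a\neq u$ by the rejection check, $b\neq w$ by the definition $B_w(a)\in N(a)\setminus\{w\}$, and the remaining equalities are ruled out by the absence of self-loops. Hence if $G$ is $C_4$-free no vertex ever rejects.

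For soundness, suppose $G$ is $\eps$-far from being $C_4$-free. By Claim~\ref{claim:cover}, $G$ contains a collection $\{C^j=(v_j,w_j,x_j,u_j)\}_{j\in J}$ of at least $\eps m/4$ pairwise edge-disjoint copies of $C_4$. By Claim~\ref{claim:single C}, $\Pr[I_{v_j,u_j}=1]\geq 1/(2m)$ for each $j$, and by Lemma~\ref{lemma:ind} these events are mutually independent. Whenever some $I_{v_j,u_j}$ equals $1$ in an iteration, $v_j$ reports the path $\langle v_j,w_j,x_j\rangle$ to $u_j$, the conditions $w_j\neq u_j$ and $x_j\in N(u_j)$ both hold by the cycle structure, and so $u_j$ rejects. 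Thus the probability that no success occurs in a single iteration is at most $(1-1/(2m))^{\eps m/4}\leq e^{-\eps/8}$, and since each iteration uses fresh randomness, after $t$ iterations the failure probability is at most $e^{-t\eps/8}$, which is below $1/3$ for $t=\Theta(1/\eps)$.

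For the resource bounds, one preliminary round suffices for each vertex to learn $N(v)$ together with the degrees of its neighbors, and then each of the $O(1/\eps)$ iterations uses only $O(1)$ rounds in which every message is either a single vertex ID or a triple of vertex IDs, giving $O(1/\eps)$ rounds and $O(\log n)$ bits per edge per round overall. The one subtlety I anticipate is making sure that the reject condition cannot fire spuriously when $G$ is $C_4$-free, and this is precisely what the distinctness argument in the completeness part handles.
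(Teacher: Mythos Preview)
Your proposal is correct and follows the same route as the paper's own proof: Claim~\ref{claim:cover} supplies $\eps m/4$ edge-disjoint copies of $C_4$, Claim~\ref{claim:single C} and Lemma~\ref{lemma:ind} turn these into independent success events of probability at least $1/(2m)$ each, and the independent iterations then amplify. Your completeness argument is in fact more careful than the paper's one-line version, since you explicitly verify that the four vertices $u,w,a,b$ are pairwise distinct (using the rejection check for $a\neq u$, the definition of $B_w(a)$ for $b\neq w$, and simplicity of $G$ for the rest); the paper just asserts that Line~\ref{line:rej} cannot fire when $G$ is $C_4$-free.
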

\begin{proof}
  \emph{Completeness:} If $G$ is $C_4$-free then Line~\ref{line:rej}
  is never satisfied, hence for every $v$ Algorithm~\ref{alg:htest}
  terminates at Line~\ref{line:acc}.
	
  \emph{Soundness:} Let $G=(V,E)$ be a graph which is $\eps$-far from
  being $C_4$-free. Therefore, there exist $\ell\triangleq \eps m/4$
  edge disjoint copies of $C_4$ in $G$. Denote these copies by
  $\{C^1,\ldots, C^{\ell}\}$, where $C^j=(v_j,w_j,w_j,u_j)$. In each
  iteration, the cycle $C^j$ is detected if $I_{v_j,u_j}=1$, which (by
    Claim~\ref{claim:single C}) occurs with probability at least
    $1/(2m)$. The cycles $\{C^j\}_j$ are edge-disjoint, hence, by
    Lemma~\ref{lemma:ind}, the probability that none of these cycles
    is detected is at most $(1-1/(2m))^\ell$. The iterations are
    independent, and hence the probability that all the iterations
    fail to detect one of these cycles is at most
    $(1-1/(2m))^{\ell\cdot t}$. Since $\ell=\eps m/4$, setting
    $t=16/\eps$ reduces the probability of false accept to at most $1/3$, as required.
\end{proof}

\begin{algorithm}\small
\DontPrintSemicolon
Send $v$ and $d(v)$ to all $u\in N(v)$
\tcp*{$v$ learns $N(v)$ and $d(u)$ for every $u\in N(v)$}

Define the following distribution $\pi^v$ over $N(v)$: For every $w \in N(v)$,  $\pi^v(w) \triangleq d(w)/\sum_{x \in N(v)}d(x)$ .\;

\For{$t\triangleq (16/\eps)$ times}{
For every neighbor $u\in N(v)$ independently draw $A_u(v)\sim\pi^v$ and $B_u(v)\sim U(N(v)\setminus \{u\}$, send $B_u(v)$ to $u$.
\;

For every neighbor $u\in N(v)$ send the path $\langle v,A_u(v),B_v(A_u(v)) \rangle$ to $u$.\label{line:msg}
\;

\If{$\exists w\in N(v)$ s.t. $v$ received the path $\langle w,a,b\rangle $ from
  $w$, where  $v \neq a$ and $b\in N(v)$}{
  \KwRet{REJECT}\nllabel{line:rej}\tcp*{A cycle $C=(v, w, a, b)$ was
    found.}}  }
\KwRet{ACCEPT}\nllabel{line:acc}

\caption{$C_4$-free-test$(v)$\label{alg:htest}}
\end{algorithm}

\if\arxiv=0
\paragraph*{Extending Algorithm~\ref{alg:htest}}
\else
\paragraph*{Extending Algorithm~\ref{alg:htest}.}
\fi
The algorithm can be easily extended to test $H$-freeness for any connected $H$ over four nodes. If $H$ is a $K_{1,3}$ then clearly $H$-freeness can be tested in one round.
Otherwise, $H$ is Hamiltonian and can be tested by simply sending an additional bit in the message sent in Line~\ref{line:msg} of the algorithm.
The additional bit indicates whether $v$ is connected to $B_v(A_u(v))$. Given this information, $u$ can determine the subgraph induced on $\{u, v, A_u(v), B_v(A_u(v))\}$, and hence rejects if $H$ is a subgraph of this induced subgraph.
Therefore we obtain the following theorem.

\begin{theorem}
  There is an algorithm which is an $\eps$-tester for
  $H$-freeness for any connected $H$ over $4$ vertices. The round complexity of the algorithm is
  $\Theta(1/\eps)$ and in each round no more than $O(\log n)$ bits are
  communicated along each edge.
\end{theorem}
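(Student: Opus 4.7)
The plan is to split the argument by the topology of $H$. Up to isomorphism, the connected graphs on four vertices are $K_{1,3}$, $P_4$, $C_4$, the paw ($K_3$ with a pendant), the diamond ($K_4$ minus an edge), and $K_4$. Only the star $K_{1,3}$ lacks a Hamiltonian path, so I would treat it separately and reduce every other case to a small modification of Algorithm~\ref{alg:htest}. For $H=K_{1,3}$ the round complexity is actually $1$: every vertex already knows its own degree and rejects iff it is at least $3$. Completeness is immediate, and $\eps$-farness from $K_{1,3}$-freeness implies the existence of a vertex of degree $\geq 3$, so soundness is deterministic.

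For each of the five remaining graphs, fix a Hamiltonian path $(h_1,h_2,h_3,h_4)$ of $H$. I would run Algorithm~\ref{alg:htest} unchanged, except that in Line~\ref{line:msg} the vertex $v$ appends to the path $\langle v, A_u(v), B_v(A_u(v))\rangle$ one extra bit indicating whether $B_v(A_u(v)) \in N(v)$. After the message arrives, $u$ knows every pair's adjacency status on the quadruple $\{u, v, A_u(v), B_v(A_u(v))\}$: the edges $(u,v)$, $(v, A_u(v))$, $(A_u(v), B_v(A_u(v)))$ are guaranteed; $u$ checks $(u, A_u(v))$ and $(u, B_v(A_u(v)))$ against its own neighborhood; and the extra bit discloses $(v, B_v(A_u(v)))$. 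The vertex $u$ rejects iff the induced subgraph on these four vertices (which are required to be distinct) contains a subgraph isomorphic to $H$. Messages remain $O(\log n)$ bits and the total number of rounds is unchanged, so the bandwidth and round guarantees match those of Algorithm~\ref{alg:htest}.

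Completeness follows because a rejection exhibits four distinct vertices in $G$ whose induced subgraph contains $H$, so $G$ is not $H$-free. For soundness I would imitate the $C_4$ analysis. A greedy-removal argument as in Claim~\ref{claim:cover}, charging $|E(H)|\leq 6$ edges per removed copy, shows that $\eps$-farness from $H$-freeness implies the existence of at least $\eps m /6$ edge-disjoint copies of $H$ in $G$. For each such copy, label its vertices $(x_1,x_2,x_3,x_4)$ along the chosen Hamiltonian path of $H$ and associate it with the ordered pair $(u,v)=(x_1,x_2)$; the copy is detected in a given iteration whenever $A_u(v)=x_3$ and $B_v(x_3)=x_4$, an event of probability $\geq 1/(2m)$ by the computation in Claim~\ref{claim:single C}. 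Edge-disjointness of the copies forces the random variables used across different copies to be indexed by disjoint directed edges, so the detection indicators are independent exactly as in Lemma~\ref{lemma:ind}. Setting $t=\Theta(1/\eps)$ then drives the overall failure probability below $1/3$.

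The only real obstacle I anticipate is the case analysis embedded in the detection step: for each of the five non-star graphs one must confirm that a single fixed Hamiltonian-path assignment, together with the six pairwise adjacencies revealed to $u$, truly suffices to certify a copy of $H$ on the sampled quadruple. This is a finite combinatorial verification rather than a deep argument, but it is the step where, for non-Hamiltonian targets like $K_{1,3}$, the construction breaks down and forces the separate one-round handling described above.
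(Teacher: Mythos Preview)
Your proposal is correct and mirrors the paper's own argument almost exactly: the paper also handles $K_{1,3}$ separately in one round, and for every other connected $H$ on four vertices it augments the message in Line~\ref{line:msg} with a single bit indicating whether $v$ is adjacent to $B_v(A_u(v))$, so that $u$ can reconstruct the full induced subgraph on $\{u,v,A_u(v),B_v(A_u(v))\}$ and reject if it contains $H$. Your write-up is in fact more explicit than the paper's---you spell out the Hamiltonian-\emph{path} labeling (the paper's word ``Hamiltonian'' is slightly loose, since $P_4$ and the paw have no Hamiltonian cycle), and you carry the edge-disjoint-copies / independence / $\Theta(1/\eps)$-iteration soundness argument through for general $H$, which the paper leaves implicit; the ``obstacle'' you flag is not a real one, since once $u$ knows all six pairwise adjacencies it simply checks subgraph containment, with no per-$H$ case analysis required.
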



\section{Testing $T$-freeness for any tree $T$}\label{sec:bigtrees}
In this section we first generalize the tester by Iwama and
Yoshida~\cite{IwamaY14} of testing $k$-path freeness to testing the
exclusion of any tree, $T$, of order $k$. Our tester has a one sided
error and it works in the general graph model with random edge
queries. We, then, show how to simulate this algorithm in the \congest-
model in $O(k^{k^2+1}\cdot\eps^{-k})$ rounds.  We assume that the
vertices of $T$ are labeled by $v_0, \ldots, v_{k-1}$.

\subsection{Global Algorithm Description and Analysis}

\paragraph*{Global Algorithm Description.}
The algorithm by Iwama and Yoshida~\cite{IwamaY14} for testing
$k$-path freeness proceeds as follows. An edge is picked u.a.r. and an
endpoint, $v$, of the selected edge,is picked u.a.r.  A random walk of
length $k$ is performed from $v$, if a simple path of length $k$ is
found then the algorithm rejects. The analysis in~\cite{IwamaY14}
shows that this process has a constant probability (depends only on
$k$ and $\eps$) to find a $k$-path in an $\eps$-far from $k$-path
freeness graph.

We generalize this tester in the following straightforward manner. We
pick a random vertex $v$ as in the above-mentioned algorithm.  The
vertex $v$ is a candidate for being the root of a copy of $T$. For the
sake of brevity we denote the (possible) root of the copy of $T$ also
by $v_0$.
From $v$ we start a ``DFS-like'' revealing of a tree which is a possible copy of $T$ with the first random vertex acting as its root.
DFS-like means that we scan a subgraph of $G$ starting from $v$ as follows: the algorithm independently and randomly selects $d_T(v_0)$ neighbors (out of the possible $d_G(v)$) and recursively scans the graph from each of these randomly chosen neighbors. While scanning, if we encounter any vertex more than once then we abort the process (we did not find a copy of $T$). If the process terminates, then this implies that the algorithm found a copy of $T$. In order to obtain probability of success of $2/3$ the above process is repeated $t=f(\eps, k)$ times. The listing of this algorithm appears in Algorithm~\ref{alg:centtreetest}.

\begin{algorithm}\small
\DontPrintSemicolon
\For{$t\triangleq \Theta(k^{k^2}/\eps^k)$ times}{
Pick an edge u.a.r. and an endpoint, $v$, of the selected edge u.a.r.\;\label{step:trees1}
Initialize all the vertices in $G$ to be un-labeled.\;
Call Recursive-tree-exclusion$(T, 0, v)$ and \KwRet{REJECT} if it returned $1$.\;
}
\KwRet{ACCEPT}.
\caption{Global-tree-free-test$(T,v)$\label{alg:centtreetest}}
\end{algorithm}

\begin{procedure}
\DontPrintSemicolon
If $v$ was already labeled then return $0$, otherwise, label $v$ by $i$.\tcp*{The recursion returns $0$ if the revealed labeled  subgraph is not $T$.}
Define $\ell = d_T(v_i) -1$ if $i>0$ and $\ell = d_T(v_i)$ otherwise.\;
Let $v_{i_1}, \ldots, v_{i_\ell}$ denote the labels of the children of $v_i$ in $T$ (in which $v_0$ is the root).\;
Pick u.a.r. $\ell$ vertices $u_1, \ldots, u_\ell$ from $N_G(v)$ and recursively call Recursive-tree-exclusion$(T, i_j, u_j)$ for each $j \in [\ell]$\;\label{step:trees2}
If one of the calls returned $0$, then return $0$, otherwise return $1$.
\caption{Recursive-tree-exclusion($T,i,v$)}
\label{proc:treeexc}
\end{procedure}

\if\arxiv=0
\subparagraph*{Global Algorithm Analysis.}
\else
\paragraph*{Global Algorithm Analysis.}
\fi
\begin{theorem}\label{thm:tfree}
	Algorithm~\ref{alg:centtreetest} is a global $\eps$-tester, one-sided error for $T$-freenes. The query complexity of the algorithm is $O\left(k^{k^2+1} \cdot \eps^{-k}\right)$. The algorithm works in the general graph model augmented with random edge samples.
\end{theorem}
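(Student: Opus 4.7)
My plan follows the standard one-sided tester template: verify completeness directly, lower-bound the rejection probability of a single iteration in the soundness case, and amplify via the $t = \Theta(k^{k^2}/\eps^k)$ independent repetitions. Completeness is immediate: if $G$ is $T$-free then no invocation of \emph{Recursive-tree-exclusion} can label a subgraph isomorphic to $T$ without a vertex collision, so every iteration returns $0$ and Algorithm~\ref{alg:centtreetest} outputs ACCEPT.

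For soundness, suppose $G$ is $\eps$-far from $T$-free. I would first exhibit many disjoint witnesses: greedily deleting the $k-1$ edges of a copy of $T$ until none remain yields a collection $\mathcal{C}$ of edge-disjoint copies of $T$ with $|\mathcal{C}| \geq \eps m/(k-1)$ (after the procedure halts the residual graph is $T$-free, so at least $\eps m$ edges have been removed). For each $C\in\mathcal{C}$, fix a labeling $u_0,\ldots,u_{k-1}$ of its vertices corresponding to a rooting of $T$ at $v_0$, and let $\ell_i$ denote the number of children of $v_i$, so that $\sum_{i=0}^{k-1}\ell_i = k-1$. A single outer iteration picks $u_0$ in Step~\ref{step:trees1} with probability $d_G(u_0)/(2m)$, and then at each $u_i$ the recursive call in Step~\ref{step:trees2} chooses the correct ordered $\ell_i$-tuple of children with probability at least $1/d_G(u_i)^{\ell_i}$, so the probability $P(C)$ that a single iteration detects $C$ satisfies
\[
P(C) \;\geq\; \frac{d_G(u_0)}{2m}\prod_{i=0}^{k-1}\frac{1}{d_G(u_i)^{\ell_i}}.
\]
Aggregating over $\mathcal{C}$ and applying a tree-level AM--GM / convexity argument (using $\sum_v d_G(v) = 2m$) should give $\sum_{C}P(C) = \Omega(\eps^k/k^{k^2})$ with no dependence on $n$ or $m$, so $t$ iterations reject with probability at least $2/3$.

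The main obstacle is precisely the degree-product accounting inside that sum. Unlike the path case of Iwama--Yoshida~\cite{IwamaY14}, the exponents $\ell_i$ vary across vertices of $T$, so the telescoping that works for paths must be replaced by a tree-structured inequality, most likely processed bottom-up with respect to the rooted $T$. Two further subtleties need care: (i)~the bound $1/d_G(u_i)^{\ell_i}$ treats each copy as \emph{labeled}, whereas the algorithm succeeds on any recursion compatible with the automorphisms of $T$ and with the ordering of the children at each node, and it is this combinatorial bookkeeping that supplies the extra factor of $k$ in the claimed query complexity $k^{k^2+1}\eps^{-k}$; (ii)~one must check that a single iteration uses $O(k)$ vertex/neighbor queries (a depth-$k$ recursion that queries at most $\ell_i$ neighbors at $u_i$, with $\sum_i\ell_i=k-1$), so that the overall query count is $t\cdot O(k)=O(k^{k^2+1}\eps^{-k})$ as stated.
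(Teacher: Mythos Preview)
Your plan has a genuine gap at precisely the step you flag as the main obstacle: the ``tree-level AM--GM / convexity argument'' that is supposed to turn $\sum_{C\in\mathcal C} P(C)$ into $\Omega(\eps^k/k^{k^2})$ does not exist, already for $T=P_3$. Consider the double star: centers $a,b$ joined by an edge, each with $p$ pendant leaves, so $m=2p+1$. This graph is $(1/2)$-far from being $P_3$-free. Any edge-disjoint family $\mathcal C$ of copies of $P_3$ has $|\mathcal C|\leq p$, and every copy has its middle vertex at $a$ or $b$; hence for each $C\in\mathcal C$ one gets $P(C)=\Theta\bigl(1/(m\cdot p)\bigr)=\Theta(1/p^{2})$ \emph{exactly}, not merely as a lower bound. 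Therefore $\sum_{C\in\mathcal C}P(C)=\Theta(1/p)\to 0$, whereas you need a positive constant depending only on $\eps$ and $k$. No averaging inequality can rescue this, because it is the sum itself that vanishes. The same example shows there is no ``telescoping that works for paths'': the Iwama--Yoshida analysis is not the direct degree-product computation you attribute to it.

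The idea you are missing is a preprocessing of $\mathcal C$ that makes the per-step success probability degree-free. After extracting the $\eps m/(k-1)$ edge-disjoint copies, the paper first fixes a labeling of $V(G)$ by $\{v_0,\ldots,v_{k-1}\}$ under which a $k^{-k}$ fraction of the copies are labeled consistently with $T$, and then iteratively removes from the family all copies through any vertex $y$ that currently lies in fewer than $\gamma\cdot d_G(y)$ surviving copies, where $\gamma=\Theta(\eps/k^{k+1})$. Each vertex is processed at most once, so at most $2\gamma m$ copies are lost and a family $\alpha^*$ of size $\Omega(\gamma m)$ remains in which \emph{every} participating vertex $y$ belongs to at least $\gamma\cdot d_G(y)$ copies of $\alpha^*$. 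Now whenever the recursion stands at a correctly labeled vertex of $\alpha^*$, a uniformly random neighbor lands on the required child within some copy of $\alpha^*$ with probability at least $\gamma$, \emph{independently of the degree}; chaining $k-1$ such steps after an initial $\Omega(\gamma)$-probability hit on a $v_0$-labeled vertex of $\alpha^*$ yields the $\Omega(\gamma^{k})$ per-iteration bound. This sparsification step---not a convexity inequality---is what produces a bound free of $n$ and $m$.
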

\begin{proof}
We closely follow the analysis approach by Iwama and Yoshida~\cite{IwamaY14}. If the input graph $G$ does not contain a copy of $T$ then the $\eps$-tester will not ``find'' any copy, and will return ACCEPT.

Let $G$ be $\eps$-far from being $T$-free.
This implies that there are at least $\ell = \eps m/(k-1)$ edge disjoint copies of $T$ (otherwise contradicting the assumption that $G$ is $\eps$-far being $T$-free). Let $\alpha= \{T_1, \ldots T_\ell\}$ denote such set of edge disjoint copies of $T$.
Note that two trees in $\alpha$ might intersect in one of their vertices.
Recall that $T$ is labeled by $v_0,\ldots,v_{k-1}$.
For the sake of the analysis we consider a labeling of the vertices of $G$ where each vertex has a label in $\{v_j \mid j\in[k-1]\}$.
We say that a copy of $T$ in $\alpha$ is {\em labeled correctly} if the labeled $T$ and its labeled copy are the same (note that this is a more restrictive requirement then asking for an isomorphic mapping between the two objects).
We consider a subset of $\alpha$ which are the copies of $T$ which are labeled correctly, which we denote by $\alpha^c$.
Note that from linearity of expectation there exists a labeling such that the size of $\alpha^c$ is at least $\eps m / \left((k-1)\cdot k^k\right)$ - we fix our labeling to this one. Define $\beta \triangleq \eps / \left((k-1)\cdot k^k\right)$.

We now ``sparsify'' our set of trees even more.
We proceed in iterations, initially $\alpha_0^c = \alpha^c$ and define $\gamma \triangleq \beta/4$.
We continue to the $i$-th iteration as long as there exists a vertex, $y_i$, in $\alpha_{i-1}^c$ such that the number of copies of $T$ in $\alpha_{i-1}^c$ that $y_i$ belongs to, is less than $\gamma \cdot d_G(y)$.
In this case we obtain $\alpha_i^c$ from $\alpha_{i-1}^c$ by removing from $\alpha_{i-1}^c$ all the trees that contain $y_i$.
Note that this results in a deletion of at most $2\gamma \cdot |E(G)|$ copies of $T$ from $\alpha^c$, as each vertex gets removed at most once.
Therefore, at the last iteration, we obtain in this way the set $\alpha^*$ which contains at least $(\beta/2)\cdot m$, edge-disjoint, correctly colored, copies of $T$.

We now focus on the copies of $T$ in $\alpha^*$.
Fix an iteration of Algorithm~\ref{alg:centtreetest}, and consider the event that on Line~\ref{step:trees1}, the starting vertex, $v$, is a vertex in $\alpha^*$ which is labeled by $v_0$.
The probability that this event occurs is at least $\beta/4$.
This follows from the fact that there are at least $(\beta/2)\cdot m$ edges in $\alpha^*$ in which one of their endpoints is labeled by $v_0$.
Conditioned on this event, the probability that $u_1, \ldots, u_\ell$, that are chosen in Line~\ref{step:trees2} of Procedure~\ref{proc:treeexc}, are in $\alpha^*$ and their labels are $v_{i_1}, \ldots, v_{i_\ell}$, respectively, is at least $\gamma^\ell$.
Therefore, by an inductive argument, the probability that the algorithm finds $T$ is at least $(\beta/4)\cdot \gamma^{k-1} = \Omega(\eps^k/(k^{k^2}))$, for each iteration.
Since the number of iterations is $\Theta(k^{k^2}/\eps^k)$, the algorithm rejects $G$ with probability at least $2/3$.
\end{proof}

\subsection{Simulating the Global tester in the \congest-\! model}
In order to simulate the above global tester in the \congest-model, we
first need to show how to simulate a random choice of the initial
vertex. Note, that in the distributed setting we are interested in the
case where $|E(G)| \geq |V(G)|-1$, hence in Line~\ref{step:trees2} of
Algorithm~\ref{alg:centtreetest}, one can simply pick a vertex uniformly at random.


\paragraph*{Distributed \congest\ Simulation.}
The simulation proceeds in three stages, as follows.

\emph{Random ranking assignment.} First, each vertex $v$ picks a random rank (independently), denoted by $r(v)$, to be a uniform random number in $[n^2]$.

\emph{Labeling ``competition'' phase.} Let us call the vertex $v_0$ the vertex at level $1$, its children the vertices at level $2$ and so on.
Now, at the first round: each vertex label itself as $v_0$, picks random neighbors as in Line~\ref{step:trees2} of Algorithm~\ref{alg:centtreetest} and sends to each selected vertex its respective label.
At the next round, each vertex $v$ picks the label that was assigned to it by the vertex with the highest rank, which we refer to as the root.
If $v$'s label is of  level $2$ then it continues by simulating Procedure~\ref{proc:treeexc}. Specifically, it selects random neighbors and their respective labels and send to each selected neighbor a message with the respective label and the rank of its root.
And so this process continues for $k$ rounds.

\emph{Labeling verification phase.} Now, one needs to report to each root if it succeeded in finding $T$.
This can be achieved in additional $k$ rounds: in rounds $i$ all the vertices that were assigned with label from level $i$ send their parent in the revealed tree, whether they succeeded or not, that is,  a leaf reports on success if it was not labeled more than once by the respective root, and an internal node reports on success if all its children report on success and if it was not labeled more than once by the respective root.
The correctness of the simulations follows from the fact that the messages that are originated from the vertex with the highest rank always get prioritizes and thus Algorithm~\ref{alg:centtreetest} is simulated.

\begin{theorem}
	There is an $\eps$-tester in the \congest-model that on input $T$, where $T$ is a tree, the tester tests if the graph is $T$-free. The rounds complexity of the tester is $O\left(k^{k^2+1} \cdot \eps^{-k}\right)$ where $k$ is the order of $T$.
\end{theorem}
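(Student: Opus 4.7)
The plan is to prove this theorem by implementing the global tester of Algorithm~\ref{alg:centtreetest} in the \congest-model. I would run $t=\Theta(k^{k^2}/\eps^k)$ sequential iterations; each iteration requires only $O(k)$ \congest-rounds, which yields the claimed $O(k^{k^2+1}\cdot \eps^{-k})$ total round complexity.

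The key mechanism is a rank-based contention resolver. Since Theorem~\ref{thm:tfree} only needs each good root vertex to be sampled with probability polynomial in $\eps$ and $1/k$, and since $|E|\ge|V|-1$ in any connected distributed setting, it suffices to let every vertex act as a candidate root in every iteration. Ties are broken by assigning each vertex a fresh independent uniform rank $r(v)\in[n^2]$ at the start of the iteration; all ranks are distinct with probability at least $1-1/\poly(n)$ by a union bound, which is absorbed into the failure budget.

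The simulation of one iteration proceeds in two phases. In the labeling phase, lasting exactly $k$ rounds, each vertex $v$ currently labeled as a level-$i$ node $v_i$ of $T$ for some surviving root picks $\deg_T(v_i)-1$ (or $\deg_T(v_0)$ when $i=0$) random neighbors as prescribed by Procedure~\ref{proc:treeexc} and sends each the corresponding child label in $T$, together with the rank of its root. Every recipient retains only the message whose associated root-rank is largest among all messages it receives (and among any label it already holds), so each vertex holds and forwards at most one label per round, fitting in $O(\log n)$ bits per edge. In the verification phase, also lasting $k$ rounds, the vertices labeled as leaves of $T$ first report a single bit to their tentative parent indicating whether they were uniquely labeled by their winning root; in each subsequent round, every vertex at the next higher level reports success iff it too was uniquely labeled and all its tentative children reported success. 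After $k$ verification rounds, each surviving root knows whether a valid copy of $T$ rooted at it was discovered, and rejects if so.

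The main obstacle will be verifying that the per-iteration rejection probability matches that in Theorem~\ref{thm:tfree}. I would argue as follows: conditioning on the globally highest-ranked vertex being some fixed $v$, the exploration rooted at $v$ is never preempted by any higher-rank root, so the induced sequence of uniform neighbor choices at the vertices labeled by $v$'s exploration is precisely the sequence that Algorithm~\ref{alg:centtreetest} would generate on input starting vertex $v$. Because the marginal distribution of the highest-ranked vertex is uniform on $V$ and $n\le |E|+1$, the substitution of ``pick a uniform random vertex'' for ``pick a uniform endpoint of a uniform edge'' only costs constant factors in the probability of hitting a good $v_0$-labeled vertex of some tree in $\alpha^*$, so the same $\Omega(\eps^k/k^{k^2})$ per-iteration rejection bound of Theorem~\ref{thm:tfree} carries through. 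Running $t=\Theta(k^{k^2}/\eps^k)$ iterations amplifies this to rejection probability at least $2/3$. Concurrent lower-ranked explorations may be aborted midway when they meet messages from a higher-ranked root, but this is harmless since soundness only requires some root to detect a copy; completeness is immediate because a rejection is only declared when a full faithful copy of $T$ is actually verified.
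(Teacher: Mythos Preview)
Your proposal is correct and follows essentially the same approach as the paper: assign each vertex a uniform random rank in $[n^2]$, let every vertex act as a candidate root, propagate labels for $k$ rounds with highest-rank-wins contention resolution, then spend $k$ rounds convergecasting success bits back to the roots; correctness is argued by observing that the globally highest-ranked vertex's exploration is never preempted and is distributed exactly as one run of Algorithm~\ref{alg:centtreetest} from a uniformly random start vertex (using $|E|\ge |V|-1$ to replace edge-endpoint sampling by uniform vertex sampling). Your write-up is slightly more explicit than the paper's on bandwidth and on the $1/\poly(n)$ collision probability of ranks, but the mechanism and analysis are the same.
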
 


\bibliography{test}

\newcommand{\etalchar}[1]{$^{#1}$}
\begin{thebibliography}{CHFSV16}

\bibitem[ABCP92]{awerbuch1992low}
Baruch Awerbuch, Bonnie Berger, Lenore Cowen, and David Peleg.
\newblock Low-diameter graph decomposition is in nc.
\newblock In {\em Scandinavian Workshop on Algorithm Theory}, pages 83--93.
  Springer, 1992.

\bibitem[BGK{\etalchar{+}}14]{blelloch2014nearly}
Guy~E Blelloch, Anupam Gupta, Ioannis Koutis, Gary~L Miller, Richard Peng, and
  Kanat Tangwongsan.
\newblock Nearly-linear work parallel sdd solvers, low-diameter decomposition,
  and low-stretch subgraphs.
\newblock {\em Theory of Computing Systems}, 55(3):521--554, 2014.

\bibitem[CGR13]{CGR13}
Andrea Campagna, Alan Guo, and Ronitt Rubinfeld.
\newblock Local reconstructors and tolerant testers for connectivity and
  diameter.
\newblock In {\em Approximation, Randomization, and Combinatorial Optimization.
  Algorithms and Techniques - 16th International Workshop, {APPROX} 2013, and
  17th International Workshop, {RANDOM} 2013, Berkeley, CA, USA, August 21-23,
  2013. Proceedings}, pages 411--424, 2013.

\bibitem[CHFSV16]{censor2016fast}
Keren Censor-Hillel, Eldar Fischer, Gregory Schwartzman, and Yadu Vasudev.
\newblock Fast distributed algorithms for testing graph properties.
\newblock In {\em International Symposium on Distributed Computing}, pages
  43--56. Springer, 2016.

\bibitem[EN17]{elkin2017efficient}
Michael Elkin and Ofer Neiman.
\newblock Efficient algorithms for constructing very sparse spanners and
  emulators.
\newblock In {\em Proceedings of the Twenty-Eighth Annual ACM-SIAM Symposium on
  Discrete Algorithms}, pages 652--669. SIAM, 2017.

\bibitem[FRST16]{fraigniaud2016distributed}
Pierre Fraigniaud, Ivan Rapaport, Ville Salo, and Ioan Todinca.
\newblock Distributed testing of excluded subgraphs.
\newblock In {\em International Symposium on Distributed Computing}, pages
  342--356. Springer, 2016.

\bibitem[GGR98]{goldreich1998property}
Oded Goldreich, Shari Goldwasser, and Dana Ron.
\newblock Property testing and its connection to learning and approximation.
\newblock {\em Journal of the ACM (JACM)}, 45(4):653--750, 1998.

\bibitem[GR02]{goldreich2002property}
Oded Goldreich and Dana Ron.
\newblock Property testing in bounded degree graphs.
\newblock {\em Algorithmica}, 32:302--343, 2002.

\bibitem[IY14]{IwamaY14}
Kazuo Iwama and Yuichi Yoshida.
\newblock Parameterized testability.
\newblock In {\em Innovations in Theoretical Computer Science, ITCS'14,
  Princeton, NJ, USA, January 12-14, 2014}, pages 507--516, 2014.

\bibitem[KPS13]{KPS13}
Satyen Kale, Yuval Peres, and C.~Seshadhri.
\newblock Noise tolerance of expanders and sublinear expansion reconstruction.
\newblock {\em {SIAM} J. Comput.}, 42(1):305--323, 2013.

\bibitem[MPX13]{miller2013parallel}
Gary~L Miller, Richard Peng, and Shen~Chen Xu.
\newblock Parallel graph decompositions using random shifts.
\newblock In {\em Proceedings of the twenty-fifth annual ACM symposium on
  Parallelism in algorithms and architectures}, pages 196--203. ACM, 2013.

\bibitem[Pel00]{peleg2000distributed}
David Peleg.
\newblock {\em Distributed computing: a locality-sensitive approach}.
\newblock SIAM, 2000.

\end{thebibliography}
\end{document}